\documentclass[11pt]{scrartcl}

\usepackage{amsmath}
\usepackage{amssymb}
\usepackage{amsthm}

\newtheorem{theorem}{Theorem}

\newtheorem{proposition}{Proposition}
\newtheorem{corollary}{Corollary}

\newtheorem{definition}{Definition}
\newtheorem{example}{Example}

\title{Fairly Allocating Contiguous Blocks of Indivisible Items\footnote{A preliminary version of this paper appeared in Proceedings of the 10th International Symposium on Algorithmic Game Theory, September 2017.}}
\author{
Warut Suksompong\\Stanford University
}
\date{\vspace{-5ex}}

\begin{document}

\maketitle

\begin{abstract}
In this paper, we study the classic problem of fairly allocating indivisible items with the extra feature that the items lie on a line. Our goal is to find a fair allocation that is \emph{contiguous}, meaning that the bundle of each agent forms a contiguous block on the line. While allocations satisfying the classical fairness notions of proportionality, envy-freeness, and equitability are not guaranteed to exist even without the contiguity requirement, we show the existence of contiguous allocations satisfying approximate versions of these notions that do not degrade as the number of agents or items increases. We also study the efficiency loss of contiguous allocations due to fairness constraints.
\end{abstract}

\section{Introduction}

We consider the classic problem in economics of \emph{fair division}: How can we divide a set of resources among interested agents in such a way that the resulting division is fair?  This is an important issue that occurs in a variety of situations, including students splitting the rent of an apartment, couples dividing their properties after a divorce, and countries staking claims in disputed territory. The fair division literature often distinguishes between two types of resources. Some resources, such as cake and land, are said to be \emph{divisible} since they can be split arbitrarily among agents. Other resources, like houses and cars, are \emph{indivisible}---each house or car must be allocated as a whole to one agent.

To reason about fairness, we must define what it means for an allocation of resources to be fair. Several notions of fairness have been proposed, three of the oldest and best-known of which are proportionality, envy-freeness, and equitability. An allocation is said to be \emph{proportional} if the utility that
each agent gets from the bundle she receives is at least a $1/n$ fraction of her utility for the whole set of resources, where $n$ is the number of agents among whom we divide the resources. The allocation is called \emph{envy-free} if every agent thinks that her bundle is at least as good as the bundle of any other agent, and \emph{equitable} if all agents have the same utility for their own bundle. It turns out that there is a significant distinction between the two types of resources with respect to these notions. On the one hand, when resources are divisible, allocations that satisfy the three notions simultaneously always exist \cite{Alon87}. On the other hand, a simple example with two agents who both positively value a single item already shows that the existence of a fair division cannot be guaranteed for any of the notions when we deal with indivisible items.

In this paper, we study the problem of allocating indivisible items with the added feature that the items lie on a line. We are interested in finding a fair allocation that moreover satisfies the requirement of \emph{contiguity}, i.e., the bundle that each agent receives forms a contiguous block on the line. Several practical applications fit into this model. For instance, when we divide offices between research groups on the same floor, it is desirable that each research group get a contiguous block of offices in order to facilitate communication within the group. Likewise, when we allocate retail units on a street, the retailers often prefer to have a contiguous block of units in order to operate a larger store. The contiguity condition can also be interpreted in the temporal sense, as opposed to the spatial sense described thus far. An example is a situation where various organizers wish to use the same conference center for their conferences. Not surprisingly, the organizers typically want to schedule a conference in a contiguous block of time rather than during several separate periods. 

Since allocations that satisfy any of the three fairness notions do not always exist in general, the same is necessarily true when we restrict our attention to contiguous allocations. Nevertheless, we show that in light of the contiguity requirement, the existence of allocations that satisfy approximate versions of the notions can still be guaranteed. More precisely, for each notion we define an approximate version that depends on an additive factor $\epsilon\geq 0$. An allocation is said to be $\epsilon$-proportional if the utility of each agent is at most $\epsilon$ away from her ``proportional share'', $\epsilon$-envy-free if each agent envies any other agent by at most $\epsilon$, and $\epsilon$-equitable if the utilities of any two agents differ by at most $\epsilon$. Denoting the maximum utility of an agent for an item by $u_{\text{max}}$, we establish the existence of a contiguous $u_{\text{max}}$-proportional allocation and a contiguous $u_{\text{max}}$-equitable allocation for any number of agents, a contiguous $u_{\text{max}}$-envy-free allocation for two agents, and a contiguous $2u_{\text{max}}$-envy-free allocation for any number of agents. Importantly, the approximation factors do not degrade as the number of agents or items grows. We also prove that our approximation factor is the best possible for proportionality and equitability with any number of agents as well as for envy-freeness with two agents. Finally, for proportionality the factor can be improved to $\frac{n-1}{n}\cdot u_{\text{max}}$ if we know the number $n$ of agents, and we show that this is again tight.

Our results suggest that adding the contiguity requirement does not entail extra costs in terms of the approximation guarantees. Indeed, the approximation factors for proportionality and equitability with any number of agents and for envy-freeness with two agents remain tight even if we allow arbitrary allocations. This can be seen as somewhat surprising, since the space of contiguous allocations is significantly smaller than that of arbitrary allocations. Indeed, when there are $n$ agents and $m$ items, the number of arbitrary allocations is $n^m$, while the number of contiguous allocations for a fixed order of items on a line is at most $\binom{m+n-1}{n-1}n!$. The latter quantity is much less than the former if $m$ is large compared to $n$.

In addition, we investigate the efficiency loss of contiguous allocations due to fairness constraints using the \emph{price of fairness} concept initiated by Caragiannis et al. \cite{CaragiannisKaKa12}. The price of fairness quantifies the loss of social welfare that is necessary if we impose a fairness constraint on the allocation. A low price of fairness means that we can get fairness at virtually no extra cost on social welfare, while a high price of fairness implies that even the most efficient ``fair'' allocation has social welfare far below that of the most efficient allocation overall. Caragiannis et al. studied the price of fairness for the three notions of fairness using utilitarian welfare for both divisible and indivisible items. Later, Aumann and Dombb \cite{AumannDo15} focused on contiguous allocations of divisible items and considered both utilitarian and egalitarian welfare. In this paper, we complete the picture by providing tight or almost tight bounds on the price of fairness for contiguous allocations of indivisible items, again for all three classical notions of fairness and with respect to both utilitarian and egalitarian welfare. Our results are summarized in Table~\ref{table:bigsummary} along with a comparison to results from previous work.

    \begin{table*}
\begin{center}
    \begin{tabular}{ | c || c | c | c|| c | c | }
    \hline
    \textbf{Indivisible} & \multicolumn{3}{ c|| }{Contiguous (this work)} & \multicolumn{2}{ c| }{Non-contiguous (\cite{CaragiannisKaKa12})} \\ \hline
     & \multicolumn{2}{ c| }{Utilitarian} & Egalitarian & \multicolumn{2}{ c| }{Utilitarian}  \\ \hline
     & Lower & Upper &  & Lower & Upper  \\ \hline
    Proportionality &  \multicolumn{2}{ c| }{$n-1+\frac{1}{n}$} &  1 &  \multicolumn{2}{ c| }{$n-1+\frac{1}{n}$}  \\ \hline
    Equitability &  \multicolumn{2}{ c| }{\begin{tabular}[c]{@{}c@{}}$\frac{3}{2}$ for $n=2$\\$\infty$ for $n>2$\end{tabular}}  & \begin{tabular}[c]{@{}c@{}}1 for $n=2$\\$\infty$ for $n>2$\end{tabular}  &  \multicolumn{2}{ c| }{\begin{tabular}[c]{@{}c@{}}2 for $n=2$\\$\infty$ for $n>2$\end{tabular}}  \\ \hline
    Envy-freeness &  $\frac{\lfloor\sqrt{n}\rfloor}{2}$ & $\frac{\sqrt{n}}{2}+1-o(1)$  & $\frac{n}{2}$  & $\frac{3n+7}{9}-O\left(\frac{1}{n}\right)$ &  $n-\frac{1}{2}$  \\ \hline \hline
    \textbf{Divisible} & \multicolumn{3}{ c|| }{Contiguous (\cite{AumannDo15})} & \multicolumn{2}{ c| }{Non-contiguous (\cite{CaragiannisKaKa12})} \\ \hline
     & \multicolumn{2}{ c| }{Utilitarian} & Egalitarian & \multicolumn{2}{ c| }{Utilitarian}  \\ \hline
     & Lower & Upper &  & Lower & Upper  \\ \hline
    Proportionality & $\frac{\sqrt{n}}{2}^*$  & $\frac{\sqrt{n}}{2}+1-o(1)$  &  1  & $\Omega(\sqrt{n})$  & $O(\sqrt{n})$  \\ \hline
    Equitability &  $n-1+\frac{1}{n}$ & $n$  & 1  & $\frac{(n+1)^2}{4n}$  & $n$ \\ \hline
    Envy-freeness & $\frac{\sqrt{n}}{2}^*$  &  $\frac{\sqrt{n}}{2}+1-o(1)$ & $\frac{n}{2}$  & $\Omega(\sqrt{n})$  &  $n-\frac{1}{2}$  \\ \hline
    \end{tabular}
    \vspace{5mm}
    \caption{Comparison of our results on the price of fairness to previous results in \cite{AumannDo15,CaragiannisKaKa12}. The bounds with an asterisk hold for infinitely many values of $n$.}
    \label{table:bigsummary}
\end{center}
    \end{table*}

\subsection{Related work}

The contiguity condition has been studied with respect to the three classical fairness notions in the context of divisible items, often represented by a cake, with the motivation that one wants to avoid giving an agent a ``union of crumbs''. In particular, Dubins and Spanier \cite{DubinsSp61} exhibited a moving-knife algorithm that guarantees a contiguous proportional allocation. Cechl\'{a}rov\'{a} et al. \cite{CechlarovaDoPi13} showed that for any ordering of the agents, a contiguous equitable allocation that assigns contiguous pieces to the agents in that order exists. Stromquist \cite{Stromquist80,Stromquist08} proved that a contiguous envy-free allocation always exists, but cannot be found by a finite algorithm. Su \cite{Su99} used techniques involving Sperner's lemma to establish the existence of a contiguous envy-free allocation and moreover considered the related problem of rent partitioning.  Several of the results in this paper are discrete analogs of the results in the divisible setting. For example, the algorithm in Theorem \ref{thm:proportional} is a discrete analog of the Dubins-Spanier protocol, while Theorem \ref{thm:equitable2} mirrors the analogous result in the divisible setting by Aumann and Dombb \cite{AumannDo15}.

Recently, Bouveret et al. \cite{BouveretCeEl17} studied the allocation of indivisible items on a line with the contiguity condition and showed that determining whether a contiguous fair allocation exists is NP-hard when the fairness notion considered is either proportionality or envy-freeness. They also considered a more general model of the relationship between items where the items are vertices of an undirected graph. Aumann et al. \cite{AumannDoHa13} investigated the problem of finding a contiguous allocation that maximizes welfare for both divisible and indivisible items. They showed that while it is NP-hard to find the optimal contiguous allocation, there exists an efficient algorithm that yields a constant factor approximation. Bei et al. \cite{BeiChHu12} and Cohler et al. \cite{CohlerLaPa11} also considered the objective of maximizing welfare, but under the additional fairness constraint of proportionality and envy-freeness, respectively.

Additively approximating fairness notions using $u_{\text{max}}$, the highest utility of an agent for an item, has been studied before. Lipton et al. \cite{LiptonMaMo04} showed that without the contiguity requirement, a $u_{\text{max}}$-envy-free allocation exists even for general monotone valuations. Caragiannis et al. \cite{CaragiannisKuMo16} used the term ``envy-freeness up to one good'' (EF1) to refer to a closely related property of an allocation.

Besides the allocation of goods, the price of fairness has also been investigated for the allocation of chores. In particular, Caragiannis et al. \cite{CaragiannisKaKa12}, who initiated this line of research, studied the notion for both divisible and indivisible chores. Heydrich and van Stee \cite{HeydrichVa15} likewise considered the setting of divisible chores but, similarly to our work and that of Aumann and Dombb \cite{AumannDo15}, focused on contiguous allocations. Finally, Bil\`{o} et al. \cite{BiloFaFl16} applied this concept to machine scheduling problems.

\section{Preliminaries}

Let $N=\{1,2,\dots,n\}$ denote the set of agents, and $M=\{1,2,\dots,m\}$ the set of items to be allocated. We assume that the items lie on a line in this order.

Each agent $i\in N$ has some nonnegative utility $u_i(j)$ for item $j\in M$. For an agent $i$, define $u_{i,\text{max}}:=\max_{j\in M}u_i(j)$ to be the highest utility of $i$ for an item. Let $u_{\text{max}}:=\max_{i\in N}u_{i,\text{max}}$ be the highest utility of any agent for an item. As is very common (e.g., \cite{BouveretLe16,CaragiannisKuMo16,DickersonGoKa14,ManurangsiSu17}), we assume for most of the paper that utilities are \emph{additive}. Additivity means that $u_i(M')=\sum_{j\in M'}u_i(j)$ for any agent $i$ and any subset of items $M'\subseteq M$. An \emph{allocation} $\mathcal{M}=(M_1,\dots,M_n)$ is a partition of all items into bundles for the agents so that agent $i$ receives bundle $M_i$. The \emph{utilitarian welfare} of $\mathcal{M}$ is $\sum_{i\in N}u_i(M_i)$ and the \emph{egalitarian welfare} of $\mathcal{M}$ is $\min_{i\in N}u_i(M_i)$. We call the allocation \emph{contiguous} if each bundle $M_i$ forms a contiguous block of items on the line. Furthermore, we refer to a setting with agents, items, and utility functions as an \emph{instance}.

We are now ready to define the fairness notions that we will consider in this paper. We use additive versions of approximation; this is much stronger than multiplicative versions as the number of items grows.

\begin{definition}
\label{def:proportional}
An allocation $\mathcal{M}=(M_1,\dots,M_n)$ is said to be \emph{proportional} if $u_i(M_i)\geq\frac{1}{n}\cdot u_i(M)$ for all $i\in N$. For $\epsilon\geq 0$, the allocation is said to be \emph{$\epsilon$-proportional} if $u_i(M_i)\geq\frac{1}{n}\cdot u_i(M)-\epsilon$ for all $i\in N$. We refer to $\frac{1}{n}\cdot u_i(M)$ as the \emph{proportional share} of agent $i$.
\end{definition}

\begin{definition}
\label{def:envyfree}
An allocation $\mathcal{M}=(M_1,\dots,M_n)$ is said to be \emph{envy-free} if $u_i(M_i)\geq u_i(M_j)$ for all $i,j\in N$. For $\epsilon\geq 0$, the allocation is said to be \emph{$\epsilon$-envy-free} if $u_i(M_i)\geq u_i(M_j)-\epsilon$ for all $i,j\in N$.
\end{definition}

\begin{definition}
\label{def:equitable}
An allocation $\mathcal{M}=(M_1,\dots,M_n)$ is said to be \emph{equitable} if $u_i(M_i)=u_j(M_j)$ for all $i,j\in N$. For $\epsilon\geq 0$, the allocation is said to be \emph{$\epsilon$-equitable} if $|u_i(M_i)-u_j(M_j)|\leq\epsilon$ for all $i,j\in N$.
\end{definition}

There is a strong relation between proportionality and envy-freeness, as the following proposition shows. 

\begin{proposition}
\label{prop:EFproportional}
Any $\epsilon$-envy-free allocation is $\epsilon$-proportional.
\end{proposition}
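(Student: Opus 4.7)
The plan is to fix an arbitrary agent $i$ and use the $\epsilon$-envy-free inequalities she satisfies with respect to every other agent, then sum them up and exploit additivity to recover an inequality about the proportional share.

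More precisely, I would start from the hypothesis: for every $j \in N$, we have $u_i(M_i) \geq u_i(M_j) - \epsilon$. Including $j=i$ (which gives $u_i(M_i) \geq u_i(M_i) - \epsilon$, trivially true), I would sum this inequality over all $j \in N$ to obtain
\[
n \cdot u_i(M_i) \;\geq\; \sum_{j \in N} u_i(M_j) \;-\; n\epsilon.
\]
By additivity of $u_i$ and the fact that $\mathcal{M}$ is a partition of $M$, the sum $\sum_{j \in N} u_i(M_j)$ equals $u_i(M)$. Dividing both sides by $n$ then yields $u_i(M_i) \geq \frac{1}{n} u_i(M) - \epsilon$, which is precisely the $\epsilon$-proportionality condition for agent $i$. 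Since $i$ was arbitrary, the conclusion follows.

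There is essentially no obstacle here: the proof is a one-line averaging argument whose only nontrivial ingredient is additivity of the utility function, which is assumed throughout the paper. The main thing to be careful about is that the bound on the right-hand side, after dividing by $n$, gives an additive slack of exactly $\epsilon$ (not $n\epsilon$), so I would make sure the arithmetic is written clearly enough that the reader sees the cancellation of the factor of $n$.
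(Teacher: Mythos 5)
Your proof is correct and is essentially identical to the paper's: both sum the $n$ envy-freeness inequalities over $j\in N$, use additivity to identify $\sum_{j\in N}u_i(M_j)$ with $u_i(M)$, and divide by $n$ to cancel the factor of $n$ in the slack. Nothing to add.
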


\begin{proof}
Consider an $\epsilon$-envy-free allocation $\mathcal{M}$, and fix an agent $i$. We have $u_i(M_i)\geq u_i(M_j)-\epsilon$ for all $j\in N$. Summing the $n$ inequalities and dividing by $n$, we get $$u_i(M_i)\geq\frac{1}{n}\sum_{j=1}^n \left(u_i(M_j)-\epsilon\right)=\frac{1}{n}\cdot u_i(M)-\epsilon.$$ Hence the allocation is $\epsilon$-proportional. 
\end{proof}

In particular, when $\epsilon=0$, the proposition reduces to the well-known fact that any envy-free allocation is proportional. When there are two agents, proportional allocations are also envy-free (and in fact, more generally, $\epsilon$-proportional allocations are $2\epsilon$-envy-free.) This is, however, not necessarily the case if there are at least three agents. An example is when an agent values her own bundle $1/n$ of the whole set of items and values the bundle of another agent the remaining $(n-1)/n$ of the whole set of items. On the other hand, equitability neither implies nor is implied by proportionality or envy-freeness.

For each of the three (non-approximate) fairness notions defined above, the set of instances for which a contiguous allocation satisfying the notion exists is strictly smaller than the corresponding set when contiguity is not required. Indeed, suppose that there are three items and two agents who share a common utility function $u$ with $u(1)=u(3)=1$ and $u(2)=2$. An allocation in which one agent gets items 1 and 3 while the other agent gets item 2 is proportional, envy-free, and equitable. In contrast, no contiguous allocation satisfies any of the three properties.

We end this section by giving the definition of the various forms of the price of fairness. Following Caragiannis et al. \cite{CaragiannisKaKa12}, we assume the normalization $u_i(M)=1$ for all $i\in N$ when we consider these price of fairness notions.

\begin{definition}
Given an instance (along with a set of allocations considered), its \emph{utilitarian price of proportionality} (resp., \emph{utilitarian price of equitability}, \emph{utilitarian price of envy-freeness}) is defined as the ratio of the utilitarian welfare of the optimal allocation over the utilitarian welfare of the best proportional (resp., equitable, envy-free) allocation. If a proportional (resp., equitable, envy-free) allocation does not exist, the utilitarian price of proportionality (resp., equitability, envy-freeness) is not defined for that instance. The (overall) \emph{utilitarian price of proportionality} (resp., \emph{utilitarian price of equitability}, \emph{utilitarian price of envy-freeness}) is then the supremum utilitarian price of proportionality (resp., utilitarian price of equitability, utilitarian price of envy-freeness) over all instances.

The \emph{egalitarian price of proportionality}, \emph{egalitarian price of equitability}, and \emph{egalitarian price of envy-freeness} are defined analogously.
\end{definition}

\section{Proportionality}

We begin with proportionality. Our first result shows the existence of a contiguous allocation in which every agent receives at least her proportional share minus $\frac{n-1}{n}$ times her utility for her highest-valued item.

\begin{theorem}
\label{thm:proportional}
Given any instance, there exists a contiguous allocation $\mathcal{M}$ such that $$u_i(M_i)\geq\frac{1}{n}\cdot u_i(M)-\frac{n-1}{n}\cdot u_{i,\text{max}}$$ for all agents $i\in N$. In particular, there exists a contiguous $\frac{n-1}{n}\cdot u_{\text{max}}$-proportional allocation.
\end{theorem}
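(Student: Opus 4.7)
My plan is to induct on the number of agents $n$, implementing a discrete analog of the Dubins-Spanier moving-knife protocol with a stopping threshold that is shifted downward by exactly the slack the theorem promises. The base case $n=1$ is immediate, since the single agent receives all items.

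For the inductive step, I would set $t_i := \tfrac{1}{n} u_i(M) - \tfrac{n-1}{n} u_{i,\max}$ for each agent $i$ and then pick the smallest $j \in \{0, 1, \ldots, m\}$ such that some agent $i^*$ satisfies $u_{i^*}(\{1, \ldots, j\}) \geq t_{i^*}$, where $\{1,\ldots,0\}$ denotes the empty prefix. Such a $j$ always exists because $u_i(M) \geq t_i$. I would assign the block $\{1, \ldots, j\}$ to $i^*$ (possibly empty, if $t_{i^*}\leq 0$) and recurse on the remaining $n-1$ agents and items $\{j+1, \ldots, m\}$.

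Agent $i^*$ meets the target by construction. For any remaining agent $i$, the minimality of $j$ gives $u_i(\{1,\ldots,j\}) \leq t_i + u_{i,\max}$ in both the typical case $j\geq 1$ (where $u_i(\{1,\ldots,j-1\}) < t_i$) and the boundary case $j=0$ (where the inequality is immediate from $t_i + u_{i,\max} \geq 0$). Hence
\[
u_i(\{j+1, \ldots, m\}) \;\geq\; u_i(M) - t_i - u_{i,\max} \;=\; \tfrac{n-1}{n} u_i(M) - \tfrac{1}{n} u_{i,\max}.
\]
Applying the inductive hypothesis to the subinstance -- noting that the maximum valuation over the remaining items is at most $u_{i,\max}$ -- yields $u_i(M_i) \geq \tfrac{1}{n-1} u_i(\{j+1, \ldots, m\}) - \tfrac{n-2}{n-1} u_{i,\max}$, and substituting the previous bound collapses this algebraically to exactly $\tfrac{1}{n} u_i(M) - \tfrac{n-1}{n} u_{i,\max}$. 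The ``in particular'' clause then follows from $u_{i,\max} \leq u_{\max}$.

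The step I expect to be most delicate is the choice of threshold. The more obvious choice $\tfrac{1}{n} u_i(M)$ only produces a guarantee of $\tfrac{1}{n} u_i(M) - u_{i,\max}$, because the item that pushes the chosen agent past its proportional share can cost each remaining agent a full $u_{i,\max}$; the induction then unwinds to a coefficient of $1$ rather than $\tfrac{n-1}{n}$ on $u_{i,\max}$. Lowering the threshold by precisely $\tfrac{n-1}{n} u_{i,\max}$ shrinks this loss to $\tfrac{1}{n} u_{i,\max}$, and that tighter slack is exactly what enables the inductive arithmetic to deliver the promised factor.
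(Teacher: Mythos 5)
Your proof is correct and takes essentially the same approach as the paper: a discrete Dubins--Spanier sweep in which each agent's stopping threshold is lowered by exactly $\tfrac{n-1}{n}u_{i,\max}$, so that the item crossing the threshold costs the remaining agents only $\tfrac{1}{n}u_{i,\max}$ per round. The only cosmetic difference is that you recurse and recompute the threshold $\tfrac{1}{n-1}u_i(M')-\tfrac{n-2}{n-1}u_{i,\max}$ for the residual subinstance (inducting on $n$), whereas the paper keeps the single fixed threshold $\tfrac{1}{n}u_i(M)-\tfrac{n-1}{n}u_{i,\max}$ throughout and establishes the invariant that with $k$ agents left each retains at least $\tfrac{k}{n}u_i(M)-\tfrac{n-k}{n}u_{i,\max}$ by backward induction.
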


\begin{proof}
We process the items from left to right using the following algorithm.
\begin{enumerate}
\item Set the current block to the empty block.
\item If the current block yields utility at least $\frac{1}{n}\cdot u_i(M)-\frac{n-1}{n}\cdot u_{i,\text{max}}$ to some agent $i$, give the block to the agent. (If several agents satisfy this condition, choose one arbitrarily.)
\begin{itemize}
\item If all agents have received a block as a result of this, allocate the leftover items arbitrarily and terminate. 
\item Otherwise, if some agent receives a block in this step, remove that agent from consideration and return to Step 1.
\end{itemize}
\item Add the next item to the current block and return to Step 2.
\end{enumerate}

If an agent $i$ receives a block of items from this algorithm, she obtains utility at least $\frac{1}{n}\cdot u_i(M)-\frac{n-1}{n}\cdot u_{i,\text{max}}$. Hence it suffices to show that the algorithm allocates a block to every agent. To this end, we show by (backward) induction that when there are $k$ agents who have not been allocated a block, each agent $i$ among them has utility at least $\frac{k}{n}\cdot u_i(M)-\frac{n-k}{n}\cdot u_{i,\text{max}}$ for the remaining items. This will imply that the last agent has utility at least $\frac{1}{n}\cdot u_i(M)-\frac{n-1}{n}\cdot u_{i,\text{max}}$ left, which is enough to satisfy our condition. 

The base case $k=n$ trivially holds. For the inductive step, assume that the statement holds when there are $k+1$ agents left, and consider an agent $i$ who is \emph{not} the next one to receive a block. When there are $k+1$ agents left, her utility for the remaining items is at least $\frac{k+1}{n}\cdot u_i(M)-\frac{n-k-1}{n}\cdot u_{i,\text{max}}$. Since she does not receive the next block, her utility for the block excluding its last item is less than $\frac{1}{n}\cdot u_i(M)-\frac{n-1}{n}\cdot u_{i,\text{max}}$. This means that her utility for the block is less than $\frac{1}{n}\cdot u_i(M)+\frac{1}{n}\cdot u_{i,\text{max}}$. Hence her utility for the remaining items is at least 
$\left(\frac{k+1}{n}\cdot u_i(M)-\frac{n-k-1}{n}\cdot u_{i,\text{max}}\right)-\left(\frac{1}{n}\cdot u_i(M)+\frac{1}{n}\cdot u_{i,\text{max}}\right)$, which is equal to $\frac{k}{n}\cdot u_i(M)-\frac{n-k}{n}\cdot u_{i,\text{max}}$,
as desired. 
\end{proof}

The algorithm in Theorem \ref{thm:proportional} is similar to the Dubins-Spanier algorithm for proportional cake-cutting \cite{DubinsSp61} and runs in time $\Theta(mn)$, which is the best possible since the input also has size $\Theta(mn)$. It can also be implemented as a mechanism that does not elicit the full utility functions from the agents, but instead asks them to indicate when the value of the current block reaches their threshold. While the mechanism is not truthful,\footnote{Indeed, an agent can profit by not claiming a block when her threshold is reached if she believes that no other agent is close to their own threshold yet.}
a truthful agent always receives no less than her proportional share minus $\frac{n-1}{n}$ times her utility for the item she values most.

As the next example shows, the additive approximation factor $\frac{n-1}{n}\cdot u_{\text{max}}$ is the best possible in the sense that the existence of a contiguous $\alpha\cdot\frac{n-1}{n}\cdot u_{\text{max}}$-proportional allocation is not guaranteed for any $\alpha<1$. In fact, this is the case even if we remove the contiguity requirement.

\begin{example}
\label{ex:proptight}
Suppose that there are $m=n-1$ items any of which each agent has a utility of 1. The proportional share of every agent is $\frac{n-1}{n}$. On the other hand, in any (not necessarily contiguous) allocation, some agent does not receive an item and therefore has a utility of 0. For any fixed $\alpha<1$, the utility of this agent is less than her proportional share minus $\alpha\cdot\frac{n-1}{n}\cdot u_{\text{max}}$.
\end{example}

Even though a contiguous $u_{\text{max}}$-proportional allocation always exists, in some cases we might also want to choose the ordering on the line in which the agents are allocated blocks of items, in addition to imposing the contiguity requirement. For instance, the owner of a conference center could have a preferred lineup of the conferences, and a building manager might want to assign offices in certain parts of the floor to certain research groups. Nevertheless, the following example shows that for approximate proportionality, the ordering cannot always be chosen arbitrarily.

\begin{example}
\label{ex:proporder}
Suppose that there are two agents and $m\geq 6$ items. The first agent has utility 1 for the last three items and 0 otherwise, while the second agent has utility 1 for every item. The proportional share of the two agents less half of the utility for their highest-valued item is 1 and $\frac{m-1}{2}$, respectively. If we want to give a left block to the first agent and the remaining right block to the second agent, the left block needs to include up to item $m-2$. But this means that the second agent gets utility at most 2, which is less than the required $\frac{m-1}{2}$.
\end{example}

By increasing $m$ and the number of items for which the first agent has utility 1, we can extend the example to show that the existence of a contiguous allocation with a fixed ordering of agents is not guaranteed even if we weaken the approximation factor $u_{\text{max}}$ to $ku_{\text{max}}$ for any $k>1$.

\section{Equitability}

We next consider equitability. As with proportionality, we show that a contiguous allocation in which the values of different agents for their own block differ by no more than $u_{\text{max}}$ always exists. Unlike for proportionality, however, for equitability we can additionally choose the order in which the agents receive blocks on the line.

\begin{theorem}
\label{thm:equitable}
Given any instance and any ordering of the agents, there exists a contiguous $u_{\text{max}}$-equitable allocation in which the agents are allocated blocks of items on the line according to the ordering.
\end{theorem}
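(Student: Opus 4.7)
The plan is to proceed by induction on the number of agents $n$, adapting the moving-knife and intermediate-value argument used for divisible equitable cake-cutting by Aumann and Dombb to the discrete setting. The base case $n=1$ is trivial. For $n=2$, I sweep the single cut $c$ over $\{0, 1, \ldots, m\}$ and track the signed difference $d(c) := u_1(\{1, \ldots, c\}) - u_2(\{c+1, \ldots, m\})$. Since $d(c+1) - d(c) = u_1(c+1) + u_2(c+1) \leq 2 u_{\text{max}}$ while $d(0) \leq 0 \leq d(m)$, the function $d$ changes sign at some step. At that step the values $d(c^*)$ and $d(c^*+1)$ straddle zero, and since their absolute values sum to at most $2 u_{\text{max}}$, one of them satisfies $|d| \leq u_{\text{max}}$, producing a $u_{\text{max}}$-equitable two-block allocation.

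For the inductive step $n \geq 3$, I parameterize by the cut $c \in \{0, 1, \ldots, m\}$ ending agent $1$'s block. For each such $c$, the inductive hypothesis yields a contiguous $u_{\text{max}}$-equitable sub-allocation $\mathcal{N}(c)$ of $\{c+1, \ldots, m\}$ to agents $2, \ldots, n$ in the given order; let $L(c)$ and $U(c)$ denote the minimum and maximum utilities in $\mathcal{N}(c)$, with $U(c) - L(c) \leq u_{\text{max}}$. The combined allocation is $u_{\text{max}}$-equitable precisely when $u_1(\{1, \ldots, c\})$ lies in the interval $[U(c) - u_{\text{max}},\, L(c) + u_{\text{max}}]$, whose length is $2 u_{\text{max}} - (U(c) - L(c)) \geq u_{\text{max}}$. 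Since $u_1(\{1, \ldots, c\})$ sweeps monotonically from $0$ to $u_1(M)$ in steps of at most $u_{\text{max}}$ as $c$ ranges from $0$ to $m$, a discrete intermediate-value argument should locate a cut $c^*$ at which $u_1(\{1, \ldots, c^*\})$ lies in the corresponding target window, closing the induction.

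The main obstacle is making the inductive step's intermediate-value argument rigorous: the sub-allocation $\mathcal{N}(c)$ is not uniquely determined by the inductive hypothesis, and different IH-valid choices yield different windows, so naively the target interval could jump unpredictably as $c$ varies. To handle this, I plan to construct the family $\{\mathcal{N}(c)\}$ coherently---by showing that $\mathcal{N}(c+1)$ can always be obtained from $\mathcal{N}(c)$ via a single-item local modification (shifting the first boundary to absorb item $c+1$ into agent $2$'s block, possibly composed with a compensating shift further to the right to restore the $u_{\text{max}}$-equitable property)---so that the endpoints of the target window move by at most $u_{\text{max}}$ per unit increase in $c$. With both $u_1(\{1, \ldots, c\})$ and the target window moving by at most $u_{\text{max}}$ per step, and the window having length at least $u_{\text{max}}$, the sweep cannot skip the window, and the desired cut $c^*$ must exist.
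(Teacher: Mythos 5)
Your base case $n=2$ is correct, but the inductive step has a genuine gap, and you have correctly identified where it is without filling it. The paper avoids this route entirely: it proves the theorem with a local-search algorithm (repeatedly moving a boundary item from the block adjacent to the minimum-utility block into that block) and a potential-function argument for termination, with no induction on $n$ and no intermediate-value argument.

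The gap has two parts. First, the claim that $\mathcal{N}(c+1)$ can be obtained from $\mathcal{N}(c)$ by ``a single-item local modification, possibly composed with a compensating shift'' is not just unproven but false as described: when item $c+1$ is deleted from agent $2$'s block, restoring $u_{\text{max}}$-equitability by handing agent $2$ items from agent $3$'s block can force agent $3$ far below the common level (the items she surrenders may be worth much more to her than to agent $2$), triggering a cascade down the line that agent $n$ cannot absorb; the only escape is to lower the common level itself, and then there is no a priori bound of $u_{\text{max}}$ on how much $L(c)$ and $U(c)$ move per step. Second, even granting that $L$ and $U$ each change by at most $u_{\text{max}}$ per step, the discrete intermediate-value argument does not close. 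The target window $\left[U(c)-u_{\text{max}},\,L(c)+u_{\text{max}}\right]$ has length at least $u_{\text{max}}$, but the relative displacement of $u_1(\{1,\dots,c\})$ against the window's upper endpoint can be as large as $2u_{\text{max}}$ in a single step ($u_1$ increases by up to $u_{\text{max}}$ while $L$ decreases by up to $u_{\text{max}}$), so the sweep can jump from strictly below the window to strictly above it. Concretely, if $f(c):=u_1(\{1,\dots,c\})$ satisfies $f(c)=U(c)-u_{\text{max}}-\delta$ for small $\delta>0$, then $f(c+1)$ can reach $U(c)-\delta$ while $L(c+1)+u_{\text{max}}$ drops to $L(c)$, and $U(c)-\delta>L(c)$ whenever $U(c)-L(c)>\delta$. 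Making this strategy work would require a sharper invariant tying the motion of the window to the motion of $f$; as written, the argument does not establish the theorem.
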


Note that in order to ensure that all agents are treated equally, one should normalize the utilities across agents before applying Theorem~\ref{thm:equitable}, for example by rescaling the utilities so that $u_i(M)=1$ for all $i\in N$.

\begin{proof}
Assume without loss of generality that the required ordering of agents is $1,2,\dots,n$ from left to right. Start with an arbitrary allocation  satisfying the ordering. For any allocation $\mathcal{M}$, let $\max(\mathcal{M})=\max_{i=1}^n u_i(M_i)$ and $\min(\mathcal{M})=\min_{i=1}^n u_i(M_i)$. In each iteration, as long as the allocation does not satisfy the approximate equitability, we will move an item at the end of a block to the block that the item is adjacent to. Here is the description of the algorithm.\footnote{The algorithm is inspired by work on block partitions of sequences \cite{BaranyGr15}.}

\begin{enumerate}
\item Choose a block $M_i$ such that $u_i(M_i)=\max(\mathcal{M})$. If there are many such blocks, choose one arbitrarily.
\item If $\max(\mathcal{M})\leq\min(\mathcal{M})+u_{\text{max}}$, stop and return the current allocation.
\item Choose a block $M_j$ such that $u_j(M_j)=\min(\mathcal{M})$. If there are many such blocks, choose arbitrarily from the ones that minimize $|j-i|$.
\item Let $M_k$ be the block between $M_i$ and $M_j$ that is next to $M_j$, i.e., $k=j-1$ if $j>i$ and $k=j+1$ if $j<i$. (It is possible that $k=i$.) The block $M_k$ must be non-empty; otherwise we would have chosen $M_k$ instead of $M_j$. Move the item in $M_k$ that is adjacent to $M_j$ to $M_j$. 
\begin{enumerate}
\item If $k=i$ and the moved item has nonzero utility for agent $i$, go to Step 1.
\item Else, go to Step 2.
\end{enumerate}
\end{enumerate}
If the algorithm terminates, then as the ordering of the agents for the blocks never changed, the algorithm returns a $u_{\text{max}}$-equitable allocation with the desired ordering. Hence it remains to show that the algorithm terminates.

To this end, observe that when an item is moved in Step 4 of the algorithm, no new block with utility $\max(\mathcal{M})$ or more to the block owner is created. Indeed, the block that gets an additional item now yields utility at most $\min(\mathcal{M})+u_{\text{max}}$ to its owner, which is less than $\max(\mathcal{M})$ since the condition in Step 2 is not yet satisfied. Moreover, since items are only being moved farther away from the main block $M_i$, we must eventually reach a point where $k=i$; formally, the quantity $\sum_{z=1}^n|i-z||M_z|$ strictly increases, where $|M_z|$ denotes the number of items in $M_z$. Since the quantity is bounded from above, after a finite number of moves we will have $k=i$ and meet the condition of Step 4(a).

The argument in the previous paragraph shows that the number of blocks with utility $\max(\mathcal{M})$ decreases during the course of the algorithm. When this number reaches zero, the value $\max(\mathcal{M})$ decreases. Since there are only a finite number of allocations, the algorithm must terminate, as claimed. 
\end{proof}

The algorithm in Theorem~\ref{thm:equitable} runs in time $O(n^2m^4)$. For each iteration, computing the maximum and minimum blocks takes $O(m)$. There are $O(m^2)$ possible blocks. Each block cannot be used as the block $M_i$ in Step 1 more than once for each of the $n$ agents, since no new blocks with utility $\max(\mathcal{M})$ is created during an execution of the algorithm. Finally, once the block $M_i$ is fixed, the quantity $\sum_{z=1}^n|i-z||M_z|$ can increase at most $O(mn)$ times, yielding the claimed running time.

Example \ref{ex:proptight} shows that for any number of agents, the approximation factor $u_{\text{max}}$ for equitability cannot be improved even if we remove the contiguity requirement (and hence also the ordering). On the other hand, using the same algorithm, we can generalize Theorem \ref{thm:equitable} to any monotonic, not necessarily additive utility function with zero utility for the empty set. In particular, a $u_{\text{max}}$-equitable allocation can be found when agents are endowed with such utility functions, where the generalized definition of $u_{\text{max}}$ is the highest marginal utility of any agent for a single item, i.e., $u_{\text{max}}=\max_{i\in N,j\in M,S\subseteq M}(u_i(S\cup\{j\})-u_i(S))$.

Although the algorithm in Theorem \ref{thm:equitable} guarantees that an approximate equitable allocation exists, such an allocation can be ``equally bad'' rather than ``equally good'' for the agents. Indeed, if we start with an allocation that yields zero utility to every agent, then the algorithm will terminate immediately despite the possible existence of an equitable allocation with positive utility for all agents. If we insist on choosing the ordering of the agents, then the next example shows that a situation that leaves some agent unhappy may be unavoidable.

\begin{example}
Suppose that there are two items and two agents with $u_1(2)=u_2(1)=1$ and $u_1(1)=u_2(2)=0$. The allocation that gives item 1 to agent 2 and item 2 to agent 1 yields a utility of 1 to both agents. If we require that agent 1 receive a left block and agent 2 a right block, however, some agent is necessarily left with no utility.
\end{example}

Nevertheless, we show next that if we allow the freedom of choosing the ordering of the agents, then an allocation with a better efficiency guarantee for the agents can always be found. In particular, we can find an allocation whose egalitarian welfare equals the highest egalitarian welfare over all contiguous allocations of the instance. The proof mirrors that of the analogous result for divisible items by Aumann and Dombb \cite{AumannDo15}.

\begin{theorem}
\label{thm:equitable2}
Given any instance, there exists a contiguous $u_{\text{max}}$-equitable allocation whose egalitarian welfare equals the highest egalitarian welfare over all contiguous allocations of the instance.
\end{theorem}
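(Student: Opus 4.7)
The plan is to mirror the divisible-case argument of Aumann and Dombb. Let $v^*$ denote the maximum egalitarian welfare attainable by a contiguous allocation; since there are only finitely many contiguous allocations, $v^*$ is attained. I would then identify, among the nonempty finite family of contiguous allocations satisfying $\min_i u_i(M_i) \ge v^*$, an extremal representative: one that lexicographically minimizes the tuple obtained by sorting the agents' utilities $(u_i(M_i))_{i \in N}$ in decreasing order. Call this allocation $\mathcal{M}$. By maximality of $v^*$, the egalitarian welfare of $\mathcal{M}$ equals $v^*$ exactly, so the target equality is already in hand.

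It remains to verify that $\mathcal{M}$ is $u_{\text{max}}$-equitable. Suppose not: then some agent $i^*$ has $u_{i^*}(M_{i^*}) > v^* + u_{\text{max}} \ge u_{\text{max}}$, so $M_{i^*}$ contains at least two items, including at least one with positive utility to $i^*$. The plan is to produce a contiguous allocation $\mathcal{M}'$ with egalitarian welfare still $\ge v^*$ but a lex-strictly smaller sorted-utility vector, contradicting the minimality of $\mathcal{M}$. The modification is local: shift an end item $x$ of $M_{i^*}$ into an adjacent block $M_k$. If the end items of $M_{i^*}$ happen to have zero utility for $i^*$, they can be shifted out first without changing $u_{i^*}(M_{i^*})$, exposing an end item with positive utility. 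Welfare preservation is then immediate, because $u_{i^*}(M_{i^*}) - u_{i^*}(x) \ge u_{i^*}(M_{i^*}) - u_{\text{max}} > v^*$, while every other agent's utility either stays the same or grows.

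The main obstacle will be to show that the sorted-utility vector strictly decreases in lexicographic order. Agent $i^*$'s utility drops, but the neighbor $k$'s utility may rise by up to $u_{\text{max}}$, potentially creating a new maximum above the old one. Handling this will require choosing the neighbor (left or right) carefully, and if necessary performing a short sequence of single-item moves rather than a single move and comparing the lex-sorted tuples only at the end of the sequence. The key insight is that, because $u_{i^*}(M_{i^*}) > v^* + u_{\text{max}}$, there is enough slack in $i^*$'s block to absorb the transfer without dropping $u_{i^*}$ below $v^*$; the discrete single-item updates here play the role of the infinitesimal boundary adjustments in the divisible-item argument of Aumann and Dombb. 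Once the lex decrease is established, the contradiction follows and $\mathcal{M}$ must already be $u_{\text{max}}$-equitable with egalitarian welfare exactly $v^*$.
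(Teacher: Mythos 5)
Your setup is fine as far as it goes: choosing $v^*$ as the optimum, restricting to the finite set of contiguous allocations with egalitarian welfare at least $v^*$, and observing that any allocation in this set that fails $u_{\text{max}}$-equitability must have some agent $i^*$ with $u_{i^*}(M_{i^*}) > v^* + u_{\text{max}}$ — all correct, as is the observation that removing one item from $M_{i^*}$ cannot push $i^*$ below $v^*$. But the step you yourself flag as ``the main obstacle'' is not a technicality to be handled later; it is the entire content of the proof, and the mechanism you sketch does not deliver it. Consider $i^*$ at the very left end of the line (so there is only one admissible direction for the transfer, and ``choosing the neighbor carefully'' is vacuous), with the adjacent agent's utility already close to the current maximum and with the transferred item worth nearly $u_{\text{max}}$ to her: the single move raises the maximum entry of the sorted vector, hence lex-increases it. Cascading the surplus rightward through a ``short sequence of single-item moves'' runs into the same problem at every stage — each recipient may overshoot the old maximum — and terminates at agent $n$ with nowhere left to push, so comparing the tuples ``only at the end of the sequence'' can still show a lex-increase. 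You have not exhibited any admissible move or sequence of moves whose existence is guaranteed, so the contradiction is never reached. (Note that the lex-minimal allocation \emph{is} in fact $u_{\text{max}}$-equitable, but the only way I see to certify that is to already know the theorem.)

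The paper avoids local exchanges entirely. It inducts on $n$: starting from an optimal allocation with blocks ordered $M_1,\dots,M_n$, it sweeps the boundaries left to right, pulling each boundary leftward until $w \le u_i(M_i) < w + u_{\text{max}}$ for $i = 1,\dots,n-1$ (possible precisely because each item is worth at most $u_{\text{max}}$). If agent $n$ then has too much, her block is trimmed to value in $(w, w+u_{\text{max}}]$, and the key observation is that the leftover items together with agents $1,\dots,n-1$ form a smaller instance whose maximum egalitarian welfare is still exactly $w$ (any better sub-allocation would contradict optimality of $w$ for the full instance), so the inductive hypothesis applies. If you want to salvage an exchange-based proof, you would need a potential function that provably decreases under some always-available move while preserving the welfare floor; the sorted-lex order under single-item shifts is not such a pair. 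Otherwise, I would recommend restructuring along the paper's inductive lines.
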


\begin{proof}
Denote by $w$ the highest egalitarian welfare over all contiguous allocations of the instance, and let $\mathcal{M}=(M_1,\dots,M_n)$ be an allocation achieving this welfare, where we assume without loss of generality that $M_1,\dots,M_n$ lie in this order on the line. We claim that there exists a contiguous allocation in which each agent receives utility in the range $[w,w+u_{\text{max}}]$. To prove this claim, we proceed by induction on the number of agents. The base case $n=1$ is trivial.

Suppose that the claim holds for $n-1$ agents. Starting from the allocation $\mathcal{M}$, we will move the boundaries between consecutive blocks to the left, beginning with the boundary between the first two blocks and continuing rightwards. In particular, for $1\leq i\leq n-1$, we move the boundary between blocks $i$ and $i+1$ to the left until $w\leq u_i(M_i)<w+u_{\text{max}}$. Note that this is always possible since each item is worth no more than $u_{\text{max}}$ to any agent. After moving all $n-1$ boundaries, we reach an allocation where $w\leq u_i(M_i)<w+u_{\text{max}}$ for $1\leq i\leq n-1$ and $u_n(M_n)\geq w$. If $u_n(M_n)\leq w+u_{\text{max}}$, our claim is proved. Hence we can assume that $u_n(M_n)> w+u_{\text{max}}$. 

Next, we move the boundary between blocks $n-1$ and $n$ to the right until $w< u_n(M_n)\leq w+u_{\text{max}}$; this is possible for similar reasons as above. Temporarily remove agent $n$ and her block. The current allocation of the remaining items to the $n-1$ agents still yields utility at least $w$ to each agent. On the other hand, there is no contiguous allocation of these items to the $n-1$ agents with egalitarian welfare more than $w$. Indeed, such an allocation would imply the existence of a contiguous allocation of all items to the $n$ agents with egalitarian welfare strictly more than $w$. This means that the current distribution of the remaining items to the $n-1$ agents maximizes the egalitarian welfare, and that welfare is $w$. By the inductive hypothesis, there exists a contiguous allocation of the remaining items to the $n-1$ agents so that the utility of each agent is in the range $[w,w+u_{\text{max}}]$.\footnote{In fact, the inductive hypothesis implies the existence of such an allocation in which the utility of each agent is in $[w,w+u_{-n,\text{max}}]$, where $u_{-n,\text{max}}\leq u_{\text{max}}$ denotes the highest utility of any of the first $n-1$ agents for an item.} Combined with $w<u_n(M_n)\leq w+u_{\text{max}}$, we have our claim. 
\end{proof}

As is the case for Theorem \ref{thm:equitable}, the result can be generalized to monotonic, not necessarily additive utility functions with zero utility for the empty set, where $u_{\text{max}}$ is again defined as the highest marginal utility of any agent for a single item. A similar argument also shows that for any ordering of the agents, there exists a contiguous $u_{\text{max}}$-equitable allocation whose egalitarian welfare equals the highest egalitarian welfare over all contiguous allocations \emph{with that ordering of the agents} for the instance. However, the proof of Theorem \ref{thm:equitable2} does not give rise to an efficient algorithm for computing a desired allocation.

\section{Envy-freeness}

We now turn to envy-freeness. If we remove the contiguity requirement, it is well-known that a simple algorithm yields a $u_{\text{max}}$-envy-free allocation for any number of agents and items: Let the agents pick their favorite item in a round-robin manner from the remaining items until all items are allocated.
We show in this section that a $u_{\text{max}}$-envy-free allocation exists when there are two agents, and a $2u_{\text{max}}$-envy-free allocation exists for an arbitrary number of agents.

For two agents, Theorem \ref{thm:proportional} directly implies the following.

\begin{theorem}
\label{thm:envyfreetwo}
Given any instance with two agents, there exists a contiguous allocation such that agent $i$ has envy at most $u_{i,\text{max}}$ toward the other agent. In particular, there exists a contiguous $u_{\text{max}}$-envy-free allocation.
\end{theorem}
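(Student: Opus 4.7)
The plan is to apply Theorem \ref{thm:proportional} with $n=2$ and then convert approximate proportionality into approximate envy-freeness by exploiting additivity. This conversion is essentially the well-known observation that with two agents, proportionality and envy-freeness are equivalent, and the same argument carries through for the additive approximation (with the approximation factor effectively doubling, as remarked right after Proposition~\ref{prop:EFproportional}).

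First, I would invoke Theorem \ref{thm:proportional} for $n=2$ to obtain a contiguous allocation $\mathcal{M}=(M_1,M_2)$ satisfying
\[
u_i(M_i) \geq \tfrac{1}{2}\cdot u_i(M) - \tfrac{1}{2}\cdot u_{i,\text{max}}
\]
for each $i\in\{1,2\}$. Then, letting $j$ denote the other agent, I would use additivity to write $u_i(M) = u_i(M_i) + u_i(M_j)$, substitute this into the inequality above, and rearrange to get
\[
u_i(M_i) \geq u_i(M_j) - u_{i,\text{max}},
\]
which is exactly the envy bound claimed. The bound $u_{i,\text{max}} \leq u_{\text{max}}$ then immediately gives the contiguous $u_{\text{max}}$-envy-free conclusion stated in the second sentence of the theorem.

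There is no real obstacle here: the content of the theorem is essentially a one-line algebraic consequence of Theorem~\ref{thm:proportional} combined with additivity. The only thing to be a bit careful about is to state the per-agent bound in terms of $u_{i,\text{max}}$ rather than $u_{\text{max}}$, which simply requires keeping the individual approximation factor from Theorem~\ref{thm:proportional} throughout the rearrangement rather than upper-bounding it prematurely.
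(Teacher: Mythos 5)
Your proposal is correct and matches the paper's argument: the paper proves this theorem precisely by noting that Theorem~\ref{thm:proportional} with $n=2$ gives $u_i(M_i)\geq\frac{1}{2}u_i(M)-\frac{1}{2}u_{i,\text{max}}$, and the additive rearrangement $u_i(M)=u_i(M_i)+u_i(M_j)$ yields $u_i(M_i)\geq u_i(M_j)-u_{i,\text{max}}$. Your algebra checks out, including the care taken to keep the per-agent bound in terms of $u_{i,\text{max}}$.
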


Tightness of the approximation factor $u_{\text{max}}$ follows from Example \ref{ex:proptight} with $n=2$. Moreover, an example similar to Example \ref{ex:proporder} shows that the result does not hold if we fix the ordering of the agents, even when we replace $u_{\text{max}}$ by $ku_{\text{max}}$ for some $k>1$.

To tackle the general setting with an arbitrary number of agents, we model the items as divisible items. Since a contiguous envy-free allocation always exists for divisible items \cite{Stromquist80}, we can round such an allocation to obtain an approximate envy-free allocation for indivisible items.

\begin{theorem}
\label{thm:envyfreegeneral}
Given any instance, there exists a contiguous allocation such that agent $i$ has envy less than $2u_{i,\text{max}}$ toward any other agent. In particular, there exists a contiguous $2u_{\text{max}}$-envy-free allocation.
\end{theorem}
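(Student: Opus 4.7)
The plan is to reduce to the divisible-cake setting. First, I would embed the items into the real interval $[0,m]$ by associating item $j$ with $[j-1,j]$, and, for each agent $i$, define a divisible utility $\tilde{u}_i$ whose density is the constant $u_i(j)$ on $[j-1,j]$. For any subinterval $I\subseteq[0,m]$ of length $\ell$, splitting $I$ according to which items it touches gives the density bound $\tilde{u}_i(I)\leq \ell\cdot u_{i,\text{max}}$; in particular $\tilde{u}_i(I)< u_{i,\text{max}}$ whenever $\ell<1$. Also, whenever $I$ has integer endpoints, $\tilde{u}_i(I)$ coincides with the original indivisible utility $u_i$ of the items contained in $I$.

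Second, I would invoke Stromquist's theorem \cite{Stromquist80} to obtain a contiguous envy-free division of $[0,m]$ with cut points $0=x_0\leq x_1\leq\cdots\leq x_n=m$ and a permutation $(a_1,\ldots,a_n)$ of $N$ such that agent $a_k$ receives $P_k=[x_{k-1},x_k]$ and $\tilde{u}_{a_k}(P_k)\geq\tilde{u}_{a_k}(P_\ell)$ for all $k,\ell$. I would then round every cut down by setting $y_k:=\lfloor x_k\rfloor$ (with $y_0=0$ and $y_n=m$); monotonicity of the floor gives $0=y_0\leq y_1\leq\cdots\leq y_n=m$, so the integer intervals $Q_k:=[y_{k-1},y_k]$ determine a contiguous (possibly empty) indivisible allocation $\mathcal{M}$ with $M_{a_k}$ equal to the set of items in $Q_k$.

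Third, I would compare $u_{a_k}(Q_\ell)$ to $u_{a_k}(Q_k)$. A short case analysis on whether $y_k\geq x_{k-1}$ shows $P_k\setminus Q_k\subseteq (y_k,x_k]$, an interval of length $x_k-y_k<1$; symmetrically $Q_\ell\setminus P_\ell$ is contained in an interval of length strictly less than $1$. The density bound therefore gives $\tilde{u}_{a_k}(P_k\setminus Q_k)<u_{a_k,\text{max}}$ and $\tilde{u}_{a_k}(Q_\ell\setminus P_\ell)<u_{a_k,\text{max}}$. Since $Q_k$ and $Q_\ell$ have integer endpoints,
\[
u_{a_k}(Q_k)=\tilde{u}_{a_k}(Q_k)>\tilde{u}_{a_k}(P_k)-u_{a_k,\text{max}},\qquad u_{a_k}(Q_\ell)=\tilde{u}_{a_k}(Q_\ell)<\tilde{u}_{a_k}(P_\ell)+u_{a_k,\text{max}},
\]
and combining these with the divisible envy-freeness inequality $\tilde{u}_{a_k}(P_\ell)\leq\tilde{u}_{a_k}(P_k)$ yields
\[
u_{a_k}(Q_\ell)-u_{a_k}(Q_k)<\bigl(\tilde{u}_{a_k}(P_\ell)-\tilde{u}_{a_k}(P_k)\bigr)+2u_{a_k,\text{max}}\leq 2u_{a_k,\text{max}},
\]
which is the desired per-agent envy bound; the $2u_{\text{max}}$-envy-freeness statement follows by taking the maximum over $i$.

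The main obstacle is securing the factor $2$ rather than a naive $4$. Each $Q_k$ differs from $P_k$ at both endpoints, and a careless triangle-inequality argument charges one $u_{a_k,\text{max}}$ of slack at each endpoint of both $Q_k$ and $Q_\ell$, giving $4u_{a_k,\text{max}}$. Rounding every cut in the \emph{same} direction is what saves a factor of two: with the floor convention, $Q_k$ can only lose mass at its right boundary and $Q_\ell$ can only gain mass at its left boundary, so only a single length-$<1$ slice contributes to each of the two estimates above, and the total slack is $2u_{a_k,\text{max}}$.
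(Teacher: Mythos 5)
Your proposal is correct and follows essentially the same route as the paper: embed the items as a piecewise-uniform cake on $[0,m]$, take Stromquist's contiguous envy-free division, and round all cuts in one direction (your floor rounding is exactly the paper's rule of assigning item $j$ to the owner of the point $j$). Your accounting of why the slack is $2u_{i,\text{max}}$ rather than $4u_{i,\text{max}}$ is just a more explicit version of the paper's one-line argument that each agent loses less than $u_{i,\text{max}}$ and sees each other agent gain less than $u_{i,\text{max}}$.
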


\begin{proof}
Consider a cake represented by the interval $[0,m]$. For $j\in M$, agent $i$ has uniform utility $u_i(j)$ for the interval $[j-1,j]$. Take any contiguous envy-free allocation of the cake. We round the allocation as follows: For each item $j$, if point $j$ is in the interior of a piece, allocate the item to the agent who owns that piece. Else, point $j$ is at the boundary between two pieces, and we allocate item $j$ to the agent who owns the piece to the left.

The resulting allocation is contiguous; we show that each agent $i$ has envy less than $2u_{i,\text{max}}$. The agent has no envy before the rounding. As a result of the rounding, she loses utility less than $u_{i,\text{max}}$, and any other agent gains utility less than $u_{i,\text{max}}$ from her point of view. Hence agent $i$ has envy less than $2u_{i,\text{max}}$, as claimed. 
\end{proof}

The guarantee in Theorem \ref{thm:envyfreegeneral} can be strengthened if agents have \emph{binary} utilities, i.e., the utility of each agent $i$ for any item is either $x_i$ or 0, for some $x_i>0$. In other words, each agent either approves or disapproves each item. Such utilities have been considered in the literature \cite{BouveretLe16}.

\begin{corollary}
Given any instance with agents having binary utilities, there exists a contiguous allocation such that agent $i$ has envy at most $u_{i,\text{max}}$ toward any other agent.
\end{corollary}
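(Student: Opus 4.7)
The plan is to deduce the corollary directly from Theorem~\ref{thm:envyfreegeneral} together with a short integrality argument tailored to binary utilities. Concretely, I would take the contiguous allocation $\mathcal{M}=(M_1,\dots,M_n)$ produced by the construction in the proof of Theorem~\ref{thm:envyfreegeneral} (model the items as the cake $[0,m]$ with uniform density $u_i(j)$ on $[j-1,j]$, apply Stromquist's theorem to get a contiguous envy-free cake division, and round each item to the piece whose interior contains its right endpoint, breaking boundary ties to the left). That theorem already asserts the strict inequality $u_i(M_j)-u_i(M_i)<2u_{i,\text{max}}$ for every pair of agents $i,j$.

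The extra observation for binary utilities is that, since every item contributes either $x_i$ or $0$ to agent $i$'s valuation, the value $u_i(S)$ of any $S\subseteq M$ is an integer multiple of $x_i=u_{i,\text{max}}$. Consequently, the envy $u_i(M_j)-u_i(M_i)$ lies in $x_i\cdot\mathbb{Z}$. Combining this with the strict bound from Theorem~\ref{thm:envyfreegeneral}, the only element of $x_i\cdot\mathbb{Z}$ strictly less than $2x_i$ is at most $x_i$, which gives $u_i(M_j)-u_i(M_i)\leq u_{i,\text{max}}$, as required.

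Since the heavy lifting is already done by Theorem~\ref{thm:envyfreegeneral}, I do not expect any genuine obstacle; the whole argument amounts to pointing out that the strict inequality from that theorem, combined with the discreteness of the possible envy values under binary utilities, forces the bound to collapse from just below $2u_{i,\text{max}}$ down to $u_{i,\text{max}}$. The only care needed is to confirm that the inequality in Theorem~\ref{thm:envyfreegeneral} is indeed strict (it is, because each boundary contributes rounding error \emph{strictly} less than $u_{i,\text{max}}$), so that the integrality rounding truly saves a full unit of $u_{i,\text{max}}$.
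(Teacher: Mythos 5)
Your proposal is correct and is essentially the paper's own proof: invoke the strict bound (envy strictly less than $2u_{i,\text{max}}$) from Theorem~\ref{thm:envyfreegeneral} and use the fact that under binary utilities every envy value is an integer multiple of $x_i=u_{i,\text{max}}$ to collapse the bound to $u_{i,\text{max}}$. The only cosmetic difference is that the paper explicitly sets aside agents who approve no item (for whom the claim is trivial and $x_i\neq u_{i,\text{max}}$), a degenerate case your argument implicitly skips.
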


\begin{proof}
We do not need to consider any agent who does not approve any item, so we may assume that $u_{i,\text{max}}=x_i>0$ for each agent $i$. Theorem \ref{thm:envyfreegeneral} guarantees the existence of an allocation such that agent $i$ has envy less than $2x_i$ toward any other agent. However, since the utility of agent $i$ for any set of items is an integer multiple of $x_i$, the agent's envy can be at most $x_i$.
\end{proof}


\section{Price of Fairness}

In this section, we quantify the price of fairness for contiguous allocations of indivisible items with respect to the three notions of fairness and the two types of welfare. We derive tight or almost tight bounds for each of the six resulting combinations. Previous work has studied the problem for the setting of arbitrary (i.e., not necessarily contiguous) allocations of divisible and indivisible items \cite{CaragiannisKaKa12} as well as contiguous allocations of divisible items \cite{AumannDo15}; our results therefore close the remaining gap. The comparison of our results to previous work is shown in Table~\ref{table:bigsummary}. In fact, for several of the results we will be able to adjust arguments from the previous work mentioned to our setting. 

\begin{theorem}
\label{thm:utilproportional}
The utilitarian price of proportionality for contiguous allocations of indivisible items is $n-1+\frac{1}{n}$.
\end{theorem}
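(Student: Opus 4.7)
The plan is to prove matching upper and lower bounds of $n-1+\frac{1}{n}$.

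For the upper bound, I use the normalization $u_i(M) = 1$ that accompanies the price of fairness definition. Let $\mathcal{O}$ be an optimal contiguous allocation with welfare $W$, and let $\text{PROP}$ denote the maximum welfare over contiguous proportional allocations. If $\mathcal{O}$ is itself proportional, the ratio is exactly $1$. Otherwise some agent $i$ satisfies $u_i(O_i) < \frac{1}{n}$; combined with $u_j(O_j) \leq u_j(M) = 1$ for all $j$, this yields $W < \frac{1}{n} + (n-1) = n-1+\frac{1}{n}$. Since any proportional allocation has welfare at least $n \cdot \frac{1}{n} = 1$, we have $\text{PROP} \geq 1$, and hence $W/\text{PROP} < n-1+\frac{1}{n}$.

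For the lower bound, I construct a family of instances whose ratios approach $n-1+\frac{1}{n}$. For $n=2$, take $m$ items, let agent $1$ have uniform utility $\frac{1}{m}$, and let agent $2$ have utility $\frac{1}{2}-\epsilon$ on items $1$ and $m$ and uniform utility $\delta = \frac{2\epsilon}{m-2}$ on the remaining items. In any contiguous allocation, agent $2$'s block can capture at most one of the two peaks; reaching her proportional share $\frac{1}{2}$ from a single peak plus $s$ interior items requires $s\delta \geq \epsilon$, i.e., $s \geq \frac{m-2}{2}$, so her block has at least $\frac{m}{2}$ items. Agent $1$ likewise needs $\frac{m}{2}$ items for her proportional share. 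Since $\frac{m}{2} + \frac{m}{2} = m$, any proportional allocation gives each agent exactly $\frac{m}{2}$ items and utility exactly $\frac{1}{2}$, so $\text{PROP} = 1$. On the other hand, the non-proportional allocation assigning item $1$ to agent $2$ and items $2,\ldots,m$ to agent $1$ achieves welfare $(\frac{1}{2}-\epsilon) + \frac{m-1}{m} \to \frac{3}{2}$ as $\epsilon \to 0$ and $m \to \infty$, giving the desired ratio. For general $n$, the construction is analogous: agent $1$ is uniform, and the remaining agents are equipped with endpoint-style peaks so that every proportional allocation is forced into roughly $\frac{m}{n}$-item blocks (welfare near $1$), while a non-proportional allocation giving one agent a short block and the others their peak items achieves welfare approaching $n-1+\frac{1}{n}$.

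The main obstacle is the lower bound construction for $n \geq 3$: the interaction between peak placement, the contiguity constraint, and the proportionality constraint must be handled carefully so that every proportional allocation is indeed forced to have welfare close to $1$ while some contiguous allocation attains welfare near $n-1+\frac{1}{n}$. The upper bound argument, by contrast, is essentially a one-line counting argument using the utility normalization.
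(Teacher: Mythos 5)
Your upper bound is correct and is exactly the paper's argument: if the welfare-optimal contiguous allocation is not proportional, some agent gets less than $\frac{1}{n}$, so its welfare is below $n-1+\frac{1}{n}$, while any proportional allocation has welfare at least $1$. Your $n=2$ lower bound also checks out (and is a different, "two peaks at the two endpoints" construction from the paper's): agent $2$ can capture at most one endpoint peak, forcing any proportional contiguous allocation to split the line in half with welfare exactly $1$, while giving agent $2$ only item $1$ yields welfare tending to $\frac{3}{2}$.

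The genuine gap is the lower bound for $n\geq 3$, which is the crux of the theorem and which you leave as "analogous." It is not analogous in any obvious way: a line has only two endpoints, so "endpoint-style peaks" cannot simultaneously constrain $n-1$ agents, and the middle agents in a contiguous allocation receive interior blocks that an endpoint-peak gadget does not restrict at all. Moreover the ordering of agents along the line is not fixed, so you would have to rule out high-welfare proportional allocations under every ordering, and you must also exhibit at least one proportional contiguous allocation (otherwise the price is undefined on your instance) --- none of this is addressed. For comparison, the paper's construction uses $m=2n-1$ items with a chain structure: agent $i<n$ values item $2i-1$ at $\epsilon$, item $2i$ at $\frac{1}{n}-\epsilon$, and item $2i+1$ at $\frac{n-1}{n}$, while agent $n$ spreads value $\frac{1}{n}-\epsilon$ over the odd items. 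A counting argument shows every agent must receive exactly one odd item to reach her proportional share, and a cascading argument starting from agent $n$ (who must take item $2n-1$) then pins down the unique proportional allocation, which has welfare $1+(n-1)\epsilon$; meanwhile shifting every block by one item yields welfare $n-1+\frac{1}{n}-n\epsilon$. You would need to either adopt such a cascading construction or supply a genuinely new one with all three verifications (existence of a proportional allocation, welfare near $1$ for all proportional allocations under all orderings, and welfare near $n-1+\frac{1}{n}$ for some contiguous allocation) before the claim is proved.
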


\begin{proof}
\emph{Upper bound}: Consider an arbitrary instance. If a contiguous allocation with maximum utilitarian welfare of the instance is also proportional, the price of proportionality is 1. Else, some agent has utility less than $\frac{1}{n}$, and so the utilitarian welfare of this allocation is less than $n-1+\frac{1}{n}$. On the other hand, in any proportional allocation, every agent has utility at least $\frac{1}{n}$ and therefore the utilitarian welfare is at least 1.

\emph{Lower bound}: Let $m=2n-1$, $0<\epsilon<\frac{1}{n}$, and assume that the utilities are as follows:

\begin{itemize}
\item For $i=1,\dots,n-1$: $u_i(2i-1)=\epsilon$, $u_i(2i)=\frac{1}{n}-\epsilon$, $u_i(2i+1)=\frac{n-1}{n}$, and $u_i(j)=0$ otherwise.
\item $u_n(2j-1)=\frac{1}{n}-\epsilon$ for $j=1,\dots,n-1$, $u_n(2n-1)=\frac{1}{n}+(n-1)\epsilon$, and $u_n(j)=0$ otherwise.
\end{itemize}

Consider the contiguous allocation that assigns the first item to agent $n$ and items $2i$ and $2i+1$ to agent $i$ for $i=1,\dots,n-1$. The utilitarian welfare of this allocation is $\left(\frac{1}{n}-\epsilon\right)+(n-1)(1-\epsilon)=n-1+\frac{1}{n}-n\epsilon$.

On the other hand, consider any proportional allocation. Each agent must get at least one odd-numbered item in order for her utility to be at least $\frac{1}{n}$. Since there are $n$ odd-numbered items, every agent must get exactly one such item. It is clear that agent $n$ must get item $2n-1$. Given that, agent $n-1$ must get both items $2n-3$ and $2n-2$. Applying this argument repeatedly, we find that agent $i$ must get items $2i-1$ and $2i$. The utilitarian welfare of the resulting allocation, which is indeed proportional, is $1+(n-1)\epsilon$. Taking $\epsilon\rightarrow 0$, we have our claim. 
\end{proof}

\begin{theorem}
\label{thm:utilequitable}
The utilitarian price of equitability for contiguous allocations of indivisible items is $\frac{3}{2}$ for $n=2$ and infinite for $n>2$.
\end{theorem}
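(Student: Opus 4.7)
The plan is to treat the two cases $n=2$ and $n>2$ separately, with qualitatively different arguments.

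For $n=2$, I would exploit the fact that every contiguous allocation is determined by a split position $k\in\{0,1,\dots,m\}$ together with a choice of which agent receives the left block. After normalizing so that $u_1(M)=u_2(M)=1$, set $p_k=u_1([1,k])$ and $q_k=u_2([1,k])$. The two contiguous allocations at split $k$ have utility profiles $(p_k,1-q_k)$ and $(1-p_k,q_k)$, so the optimal utilitarian welfare equals $1+\max_k|p_k-q_k|$; moreover, an equitable allocation at split $k$ exists precisely when $p_k+q_k=1$, in which case both allocations at $k$ are equitable, with welfares $2p_k$ and $2q_k$. Pick any $k^*$ with $p_{k^*}+q_{k^*}=1$ and assume without loss of generality that $p_{k^*}\ge q_{k^*}$; then $p_{k^*}\ge 1/2$ and the best equitable utilitarian welfare is at least $2p_{k^*}$.

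The technical heart of the upper bound is the inequality $\max_k|p_k-q_k|\le p_{k^*}$. For $k>k^*$, write $p_k-q_k=(p_{k^*}-q_{k^*})+u_1((k^*,k])-u_2((k^*,k])$; combining $u_2\ge 0$ with $u_1((k^*,k])\le u_1((k^*,m])=1-p_{k^*}=q_{k^*}$ yields $p_k-q_k\le p_{k^*}$, and the symmetric manipulation gives $q_k-p_k\le q_{k^*}\le p_{k^*}$. The case $k<k^*$ is analogous, using $u_1((k,k^*])\le p_{k^*}$ and $u_2((k,k^*])\le q_{k^*}$. Consequently the optimal welfare is at most $1+p_{k^*}$, so the price of equitability is at most $(1+p_{k^*})/(2p_{k^*})$, which is at most $3/2$ precisely because $p_{k^*}\ge 1/2$. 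For the matching lower bound I would use the three-item instance with $u_1=(1/2,1/2,0)$ and $u_2=(1/2,0,1/2)$: the allocation giving $\{1,2\}$ to agent $1$ and $\{3\}$ to agent $2$ has welfare $3/2$, while the only equitable contiguous allocations arise from the split $k=1$ and each yield total welfare $1$.

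For $n>2$ it is enough to exhibit a single instance whose price of equitability is infinite, since the overall price is the supremum across instances. Take $m=2$ items with $u_1=u_2=(1,0)$ and $u_3=\dots=u_n=(0,1)$, which is already normalized. Giving item $1$ to agent $3$, item $2$ to agent $1$, and empty blocks to the remaining agents (placed consecutively between agents $3$ and $1$ on the line) is a contiguous allocation in which every agent has utility $0$, so it is equitable with welfare $0$. On the other hand, giving item $1$ to agent $1$ and item $2$ to agent $3$, again with empty blocks for everyone else in between, is a contiguous allocation with welfare $2$. Hence the price of equitability of this instance is infinite.

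The principal obstacle is establishing the bound $\max_k|p_k-q_k|\le p_{k^*}$ in the $n=2$ case; once that is in hand, everything else reduces to routine arithmetic and case inspection.
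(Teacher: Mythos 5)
Your proof is correct in substance and reaches the same bounds, but it diverges from the paper's in how it is organized. For $n=2$, the paper's upper bound is a short verbal case analysis: letting $x\ge\tfrac12$ be the common utility in the best equitable contiguous allocation (the agents could swap blocks if $x<\tfrac12$), it notes that in any contiguous allocation one of the two agents receives a block contained in her block from the equitable allocation (or in the other agent's), hence gets utility at most $x$, so the welfare is at most $x+1$ and the ratio at most $\tfrac{x+1}{2x}\le\tfrac32$. Your prefix-sum computation ($1+\max_k|p_k-q_k|\le 1+p_{k^*}$ versus $2p_{k^*}$) proves exactly the same inequality, just more explicitly; your three-item lower-bound instance is a slightly smaller witness than the paper's four-item one, and both give exactly $\tfrac32$. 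For $n>2$ the routes genuinely differ: the paper constructs an $\epsilon$-parametrized family in which the best equitable allocation has small \emph{positive} welfare $n\epsilon$, so each instance has a finite ratio and the supremum is infinite, whereas you exhibit a single instance whose best equitable allocation has welfare exactly $0$. Two points to tighten there. First, you only verify that \emph{some} equitable allocation has welfare $0$, but the price divides by the welfare of the \emph{best} equitable allocation; you need the (immediate) observation that with $n>2$ agents and two items some agent receives nothing, so equitability forces every agent's utility to $0$ in \emph{every} equitable allocation. Second, concluding an infinite price from a zero denominator leans on the convention $c/0=\infty$, which the paper's definition does not spell out; the paper's perturbed family sidesteps this, and if you prefer your instance a small perturbation of the utilities would make the denominator positive while keeping the ratio unbounded.
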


\begin{proof}
\emph{$n=2$}: Consider an arbitrary instance. In an equitable contiguous allocation of the instance with maximum utilitarian welfare, both agents must have utility at least $\frac{1}{2}$; otherwise they can switch their bundles. Let $x\geq\frac{1}{2}$ denote their utility in this allocation, and assume without loss of generality that the first agent gets a left block and the second agent the remaining right block. Consider any contiguous allocation. If the first agent gets a left block, then at least one of the agents gets utility at most $x$. Similarly, if the first agent gets a right block, then at least one of the agents gets utility at most $1-x\leq x$. Hence the utilitarian welfare of any contiguous allocation of this instance is at most $x+1$. The ratio of the maximum utilitarian welfare of any contiguous allocation to that of an equitable contiguous allocation is at most $\frac{x+1}{2x}\leq\frac{3}{2}$.

To show that the bound $\frac{3}{2}$ is tight, consider the following instance with $m=4$. Let $u_1(1)=u_1(3)=u_2(2)=u_2(4)=\frac{1}{2}$ and $u_i(j)=0$ otherwise. One can check that the maximum utilitarian welfare of a contiguous allocation is $\frac{3}{2}$, whereas that of an equitable contiguous allocation is $1$.

\emph{$n>2$}: Let $m=n$, $\epsilon<\frac{1}{2}$, and assume that the utilities are as follows:

\begin{itemize}
\item For $i=1,\dots,n-1$: $u_i(i)=\epsilon$, $u_i(i+1)=1-\epsilon$, and $u_i(j)=0$ otherwise.
\item $u_n(1)=1-2\epsilon$, $u_n(n-1)=u_n(n)=\epsilon$, and $u_n(j)=0$ otherwise.
\end{itemize}

Consider the contiguous allocation that assigns the first item to agent $n$ and item $i+1$ to agent $i$ for $i=1,\dots,n-1$. The utilitarian welfare of this allocation is $(1-2\epsilon)+(n-1)(1-\epsilon)=n-(n+1)\epsilon$. On the other hand, the maximum utilitarian welfare of any equitable allocation is $n\epsilon$. Taking $\epsilon\rightarrow 0$, we have our claim. 
\end{proof}

\begin{theorem}
\label{thm:utilenvyfree}
The utilitarian price of envy-freeness for contiguous allocations of indivisible items is in the interval $\left(\frac{\lfloor\sqrt{n}\rfloor}{2},\frac{\sqrt{n}}{2}+1-o(1)\right)$.
\end{theorem}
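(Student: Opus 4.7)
The theorem bounds the utilitarian price of envy-freeness from below by $\lfloor\sqrt{n}\rfloor/2$ and from above by $\sqrt{n}/2+1-o(1)$. Since $u_i(M)=1$ under the standing normalization, the optimal contiguous welfare satisfies $W^{*}\le n$, and by Proposition~\ref{prop:EFproportional} every contiguous envy-free allocation $\mathcal{M}^{EF}$ obeys $W_{EF}\ge 1$. The trivial ratio $W^{*}/W_{EF}\le n$ must be tightened on both sides.

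For the upper bound, the plan is to adapt the divisible-cake argument of Aumann and Dombb~\cite{AumannDo15}. Given an indivisible instance, consider its divisible relaxation in which each item $j$ is replaced by a uniform-density interval of length one; any contiguous indivisible allocation lifts to a contiguous cake allocation of the same welfare, so the divisible optimum satisfies $W^{\star}_{\text{div}}\ge W^{*}$. Aumann and Dombb produce a contiguous envy-free cake allocation whose welfare is at least $W^{\star}_{\text{div}}/(\sqrt{n}/2+1-o(1))$. We then discretize the cake cuts to item boundaries, keeping the discretized allocation envy-free: when $W^{*}\le \sqrt{n}/2+1$ the trivial bound $W^{*}/W_{EF}\le W^{*}$ already meets the target, and in the complementary regime one exhibits an indivisible envy-free allocation whose welfare is within $o(\sqrt{n})$ of its divisible counterpart.

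For the lower bound, set $k=\lfloor\sqrt{n}\rfloor$ and take $m=n$ items, where the first $k^2$ form $k$ consecutive blocks $B_1,\dots,B_k$ of $k$ items each (the remaining $n-k^2$ items are placed to the right). Agents $1,\dots,k$ are \emph{concentrated}: agent $i$ assigns utility $1/k$ to every item of $B_i$ and zero elsewhere. Agents $k+1,\dots,n$ are \emph{spread}: each assigns utility $1/n$ to every item. The maximum-welfare contiguous allocation gives $B_i$ to concentrated agent $i$ (and divides the rest among spread agents), yielding $W^{*}=k+o(1)$. For envy-freeness, a spread agent's utility depends only on her block size, so envy-freeness forces every spread agent's block to be weakly the largest; since there are $n-k$ spread agents and only $n$ items, and $n\ge k^2\ge 2k$ for $k\ge 2$, this forces every block to have size exactly one. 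Each concentrated agent must then hold a singleton drawn from her own $B_i$ (else she envies the holder of such an item), producing welfare $k\cdot(1/k)+(n-k)/n=2-k/n$. Hence $W^{*}/W_{EF}=k^{2}/(2k-1)\ge k/2=\lfloor\sqrt{n}\rfloor/2$, with suitable minor adjustments to the construction handling $k=1$.

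The main obstacle lies in the upper bound, specifically in the discretization step of the cake lift: rounding cut points to item boundaries can introduce up to $u_{\text{max}}$ per boundary in envy (as in Theorem~\ref{thm:envyfreegeneral}), potentially destroying exact envy-freeness. Overcoming this likely requires either a direct combinatorial proof on indivisible envy-free allocations---along the lines of the moving-knife construction in Theorem~\ref{thm:proportional} but with block sizes balanced at $\Theta(\sqrt{n})$---or a perturbation argument that nudges each discretized cut to one side while preserving envy-freeness exactly, paying for the adjustment only in lower-order terms of the final ratio.
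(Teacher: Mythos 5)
Your lower bound is essentially the paper's own construction---$\lfloor\sqrt{n}\rfloor$ agents concentrated on disjoint blocks of $\lfloor\sqrt{n}\rfloor$ consecutive items plus $n-\lfloor\sqrt{n}\rfloor$ agents with uniform utilities---and it works, modulo the small-$k$ edge cases you flag (for $k=2$, $n=4$ your counting argument only forces block size at most $2$, though envy of the empty-handed concentrated agents still kills that case). The paper's version is slightly cleaner: it lets the $r$-th concentrated agent value the entire remainder $r(r-1)+1,\dots,n$ rather than leaving items unvalued by all concentrated agents, but the resulting bound is the same, namely welfare at most $2-\frac{r}{n}$ for any envy-free allocation against an optimum of $r$.

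The genuine gap is in the upper bound. You plan to produce an envy-free \emph{indivisible} allocation of high welfare by discretizing the cuts of an Aumann--Dombb envy-free cake allocation, you correctly observe that rounding cuts to item boundaries can destroy exact envy-freeness, and you leave this unresolved; the ``complementary regime'' sketch does not repair it, and indeed that step would fail. But no discretization is needed, because you only ever have to pass from indivisible to divisible, never back. The Aumann--Dombb theorem bounds the ratio of the welfare of \emph{any} contiguous cake allocation to that of \emph{any} envy-free contiguous cake allocation by $\frac{\sqrt{n}}{2}+1-o(1)$. Embed each item $j$ as the interval $[j-1,j]$ with uniform density $u_i(j)$. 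Then the optimal contiguous indivisible allocation is in particular a contiguous cake allocation, and the best envy-free contiguous indivisible allocation (which must exist for the price of envy-freeness to be defined on the instance) is in particular an envy-free contiguous cake allocation with the same utilities. Applying the Aumann--Dombb inequality to this pair of allocations directly yields $W^{*}/W_{EF}\le\frac{\sqrt{n}}{2}+1-o(1)$, with no rounding step and hence no loss of envy-freeness to worry about.
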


\begin{proof}
\emph{Upper bound}: Given an arbitrary instance, consider a cake represented by the interval $[0,m]$. For $j\in M$, agent $i$ has uniform utility $u_i(j)$ for the interval $[j-1,j]$. Aumann and Dombb \cite[Theorem~2.1]{AumannDo15} showed that in this setting, the ratio of the utilitarian welfare of any contiguous allocation to that of any envy-free contiguous allocation is at most $\frac{\sqrt{n}}{2}+1-\frac{n}{4n^2-4n+2\sqrt{n}}=\frac{\sqrt{n}}{2}+1-o(1)$. Since contiguous allocations of indivisible items can be viewed as contiguous allocations of the cake, the result holds in our setting as well.

\emph{Lower bound}: Let $n=m$, $r=\lfloor\sqrt{n}\rfloor$, and assume that the utilities are as follows:

\begin{itemize}
\item For $i=1,\dots,r-1$: $u_i(ir-j)=\frac{1}{r}$ for $j=0,\dots,r-1$, and $u_i(j)=0$ otherwise.
\item $u_r(j)=\frac{1}{n-r(r-1)}$ for $j=r(r-1)+1,\dots,n$.
\item For $i=r+1,\dots,n$: $u_i(j)=\frac{1}{n}$ for all $j$.
\end{itemize}

Consider the contiguous allocation that assigns items $ir-r+1,\dots,ir$ to agent $i$ for $i=1,\dots,r-1$ and the remaining items to agent $r$. The utilitarian welfare of this allocation is $r$. On the other hand, an envy-free allocation exists, and in any such allocation, every agent gets exactly one item. Hence the utilitarian welfare is at most $2-\frac{r}{n}$, and the utilitarian price of envy-freeness is therefore at least $\frac{r}{2-\frac{r}{n}}>\frac{r}{2}=\frac{\lfloor\sqrt{n}\rfloor}{2}$. 
\end{proof}

\begin{theorem}
\label{thm:egalproportional}
The egalitarian price of proportionality for contiguous allocations of indivisible items is 1.
\end{theorem}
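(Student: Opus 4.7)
The plan is to show that whenever a proportional contiguous allocation exists, the contiguous allocation maximizing egalitarian welfare is itself already proportional, so the ratio is exactly $1$.

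First I would set up the normalization $u_i(M) = 1$ for all $i \in N$ as stipulated in the price-of-fairness definition. Fix any instance for which a proportional contiguous allocation $\mathcal{M}'$ exists (otherwise the egalitarian price of proportionality is undefined and need not be considered). By definition of proportionality, every agent $i$ has $u_i(M_i') \geq 1/n$, so the egalitarian welfare of $\mathcal{M}'$ is at least $1/n$. In particular, if $W^*$ denotes the maximum egalitarian welfare over all contiguous allocations of the instance, then $W^* \geq 1/n$.

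Next I would pick any contiguous allocation $\mathcal{M}^* = (M_1^*, \ldots, M_n^*)$ achieving egalitarian welfare $W^*$. By definition, $u_i(M_i^*) \geq W^* \geq 1/n$ for every $i \in N$, so $\mathcal{M}^*$ is itself proportional. This shows that the best egalitarian welfare over proportional contiguous allocations is at least $W^*$, hence equal to $W^*$. The ratio of the maximum contiguous egalitarian welfare to the maximum proportional contiguous egalitarian welfare is therefore $1$. A matching lower bound of $1$ is immediate since imposing the proportionality constraint can only shrink the set of feasible allocations.

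No genuine obstacle is expected: the argument is just the observation that the thresholds coincide, namely that an egalitarian-welfare-maximizing allocation, whenever proportionality is achievable at all, automatically clears the proportional share of every agent. The only care needed is to handle the "undefined" case (no proportional contiguous allocation exists) cleanly, and to note that the normalization $u_i(M)=1$ is what lets us compare the proportional share $1/n$ directly against egalitarian welfare values.
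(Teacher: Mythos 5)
Your argument is correct and matches the paper's proof: both observe that the existence of a proportional contiguous allocation forces the maximum egalitarian welfare to be at least $1/n$, so the egalitarian-optimal contiguous allocation is itself proportional. Your write-up is in fact slightly more careful than the paper's about the normalization and the trivial lower bound of $1$, but the idea is identical.
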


\begin{proof}
For an arbitrary instance, every agent has utility at least $\frac{1}{n}$ in a proportional contiguous allocation. This implies that in a proportional contiguous allocation with maximum egalitarian welfare of the instance, every agent also has utility at least $\frac{1}{n}$. Hence the latter allocation is also proportional, and the claim is proved. 
\end{proof}

\begin{theorem}
\label{thm:egalequitable}
The egalitarian price of equitability for contiguous allocations of indivisible items is $1$ for $n=2$ and infinite for $n>2$.
\end{theorem}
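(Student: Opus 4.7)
For $n = 2$, the plan is to exploit the rigid two-agent structure. Given any equitable contiguous allocation $E = (E_1, E_2)$ with common value $v$, the ``swap'' allocation that assigns $E_2$ to agent $1$ and $E_1$ to agent $2$ is again contiguous and equitable, with common value $1 - v$ under the normalization $u_i(M) = 1$. Hence I may assume the equitable allocation with largest egalitarian welfare has value $v \geq 1/2$; call it $E$ and write it with agent $1$ on the left and agent $2$ on the right at split point $k_e$, so $u_1(\{1,\ldots,k_e\}) = u_2(\{k_e+1,\ldots,m\}) = v$. I will then show that no contiguous allocation $A$ can have egalitarian welfare exceeding $v$, which together with the trivial direction forces the egalitarian price to equal $1$.

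The case analysis for $A$ splits on its orientation. If $A$ assigns the left block to agent $1$ at split $k_A$, then monotonicity of $u_1(\{1,\ldots,k\})$ (non-decreasing in $k$) and $u_2(\{k+1,\ldots,m\})$ (non-increasing in $k$) shows that one of these two quantities is at most $v$ depending on whether $k_A \leq k_e$ or $k_A \geq k_e$. If $A$ instead assigns the left block to agent $2$ at split $k_A$, the two relevant quantities are $u_2(\{1,\ldots,k_A\})$ and $u_1(\{k_A+1,\ldots,m\})$; comparing $k_A$ against $k_e$ and using $u_2(\{1,\ldots,k_e\}) = u_1(\{k_e+1,\ldots,m\}) = 1 - v$, one of them is at most $1 - v \leq v$, the last inequality being precisely where the reduction to $v \geq 1/2$ is used.

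For $n > 2$, I would reuse the instance from the proof of Theorem~\ref{thm:utilequitable} verbatim, with the same parameter $\epsilon < 1/2$. The contiguous allocation that gives agent $n$ item $1$ and agent $i$ item $i+1$ for $i = 1, \ldots, n-1$ has egalitarian welfare $1 - 2\epsilon$, so this lower-bounds the numerator. The heart of the argument is to show that the best equitable contiguous allocation has egalitarian welfare at most $\epsilon$. I would classify the attainable utilities for a single contiguous block: for $i \leq n-1$, agent $i$'s utility lies in $\{0, \epsilon, 1-\epsilon, 1\}$, while agent $n$'s utility lies in $\{0, \epsilon, 2\epsilon, 1-2\epsilon, 1-\epsilon, 1\}$. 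Any equitable common value $v$ must therefore lie in $\{0, \epsilon, 1-\epsilon, 1\}$. The allocation $M_i = \{i\}$ realizes $v = \epsilon$, and ruling out $v \in \{1-\epsilon, 1\}$ yields that $\epsilon$ is the maximum attainable equitable welfare. The ratio $(1 - 2\epsilon)/\epsilon$ then tends to infinity as $\epsilon \to 0$.

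The main obstacle is the combinatorial step for $n > 2$, specifically ruling out $v = 1 - \epsilon$. The key observation is that this value forces agent $n$'s block to contain item $1$ together with at least one of items $n-1$ or $n$; by contiguity, her block must then be either $\{1, \ldots, n-1\}$ or all of $M$. In either case only item $n$ (or nothing) remains for the other $n - 1$ agents, so all but at most one of them is stuck with an empty bundle of utility $0 \neq 1 - \epsilon$, contradicting equitability. The case $v = 1$ is eliminated immediately because agent $1$ would need both items $1$ and $2$ while agent $n$ would need item $1$.
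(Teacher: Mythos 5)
Your proposal is correct and follows essentially the same route as the paper: for $n=2$, the swap argument forces the best equitable allocation to have common value $v\geq 1/2$, and the orientation/split-point case analysis bounds every contiguous allocation's egalitarian welfare by $v$; for $n>2$, it reuses the same instance and allocation as Theorem~\ref{thm:utilequitable}. The only difference is that you explicitly verify that the maximum equitable egalitarian welfare in that instance is $\epsilon$ (via the attainable-utility sets and ruling out $v\in\{1-\epsilon,1\}$), a step the paper asserts without proof; your verification is correct.
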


\begin{proof}
\emph{$n=2$}: Consider an arbitrary instance. In an equitable contiguous allocation of the instance with maximum egalitarian welfare, both agents must have utility at least $\frac{1}{2}$; otherwise they can switch their bundles. Let $x\geq\frac{1}{2}$ denote their utility in this allocation, and assume without loss of generality that the first agent gets a left block and the second agent the remaining right block. Consider any equitable allocation. If the first agent gets a left block, then at least one of the agents gets utility at most $x$. Similarly, if the first agent gets a right block, then at least one of the agents gets utility at most $1-x\leq x$. Hence the egalitarian welfare of any contiguous allocation of this instance is at most $x$ as well.

\emph{$n>2$}: We use the same example as in Theorem~\ref{thm:utilequitable}. There exists a contiguous allocation with egalitarian welfare $1-2\epsilon$, while the maximum egalitarian welfare of any equitable allocation is $\epsilon$. Taking $\epsilon\rightarrow 0$, we have our claim. 
\end{proof}

\begin{theorem}
\label{thm:egalenvyfree}
The egalitarian price of envy-freeness for contiguous allocations of indivisible items is $\frac{n}{2}$.
\end{theorem}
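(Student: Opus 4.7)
The plan is to prove matching upper and lower bounds of $n/2$.

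For the upper bound, I would fix an envy-free contiguous allocation $\mathcal{M}$ with egalitarian welfare $F$ and any contiguous allocation $\mathcal{N}$ with egalitarian welfare $E$. The key combinatorial input is that both partitions consist of $n$ contiguous intervals on the line, so the common refinement contains at most $2n-1$ non-empty pieces; writing $k_i$ for the number of $\mathcal{M}$-blocks that the block $N_i$ overlaps, one gets $\sum_i k_i \leq 2n-1$. Envy-freeness yields $u_i(M_j) \leq u_i(M_i)$ for every $j$, so
$$E \leq u_i(N_i) = \sum_j u_i(N_i \cap M_j) \leq k_i \cdot u_i(M_i),$$
hence $u_i(M_i) \geq E/k_i$ for each $i$. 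To obtain the constant $n/2$, I would consider the agent $i^{\star}$ achieving $F = u_{i^{\star}}(M_{i^{\star}})$ and argue that after a careful choice of $\mathcal{M}$ among the envy-free allocations of egalitarian welfare $F$, one can assume $k_{i^{\star}} \leq n/2$; if instead $k_{i^{\star}} > n/2$, then the inequality $\sum_j k_j \leq 2n - 1$ forces at least $n/2$ of the remaining agents $j$ to have $k_j = 1$, i.e., $N_j \subseteq M_{\sigma(j)}$ for a single block $M_{\sigma(j)}$. For each such $j$, $u_j(M_j) \geq u_j(M_{\sigma(j)}) \geq u_j(N_j) \geq E$, and combining this with the proportionality bound $F \geq 1/n$ (which follows from Proposition~\ref{prop:EFproportional}) together with a brief case analysis should close the argument.

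For the lower bound, I plan to construct a family of instances with one ``spread'' agent valuing items roughly uniformly across the line and $n-1$ ``concentrated'' agents each valuing one short disjoint region heavily. A max-egalitarian contiguous allocation can hand every concentrated agent their peak region while still leaving the spread agent a sizable contiguous block, so $E$ tends to $1/2$. On the other hand, envy-freeness forces each concentrated agent's block to be at least as valuable to them as the spread agent's slice of their peak region, which propagates into the spread agent's block being short and gives $F$ tending to $2/n$; the ratio therefore approaches $n/2$.

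The main obstacle will be establishing the tight constant $n/2$ in the upper bound. The naive combinatorial argument only yields $E/F \leq \max_i k_i \leq n-1$, and extracting the factor $n/2$ will require using envy-freeness for agents other than the minimum-utility one, in the spirit of Aumann and Dombb's analogous argument for the divisible cake setting.
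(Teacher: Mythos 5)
There is a genuine gap in your upper bound, and the combinatorial route you chose is a detour that does not close. The interval-refinement bound $\sum_i k_i\le 2n-1$ is the right tool for the \emph{utilitarian} price of envy-freeness (it is exactly Aumann and Dombb's argument, reused in Theorem~\ref{thm:utilenvyfree}), but for the egalitarian price it controls $F=u_{i^\star}(M_{i^\star})$ only through $k_{i^\star}$, and you have no way to force $k_{i^\star}\le n/2$: in the case $k_{i^\star}>n/2$ your counting only tells you that many \emph{other} agents $j$ satisfy $u_j(M_j)\ge E$, which says nothing about the minimum. The idea you are missing is far simpler. With the normalization $u_i(M)=1$, any allocation with egalitarian welfare at least $1/2$ is \emph{automatically envy-free}, because an agent who values her own bundle at $\ge 1/2$ values every other bundle at $\le 1/2$. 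So either the optimal egalitarian welfare $E$ satisfies $E\ge 1/2$, in which case the optimal allocation is itself envy-free and the ratio is $1$; or $E<1/2$, and since every envy-free allocation is proportional (Proposition~\ref{prop:EFproportional}) the best envy-free allocation has $F\ge 1/n$, giving a ratio below $(1/2)/(1/n)=n/2$. Once you case on $E$ in this way, the refinement counting is superfluous.

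Your lower bound sketch also has a numerical problem: with $E\to 1/2$ and $F\to 2/n$ the ratio tends to $n/4$, not $n/2$. To reach $n/2$ you must drive the envy-free egalitarian welfare all the way down to $1/n$ while keeping $E$ near $1/2$. The paper's construction does this with $m=2n$ items: the spread agent values each item at $1/(2n)$, and the valued items of the $n-1$ concentrated agents (one worth $1/2+\epsilon$, one worth $1/2-\epsilon$ each) are interleaved so that (i) some contiguous allocation gives every concentrated agent an item worth $1/2-\epsilon$ and the spread agent a middle block worth more than $1/2$, yet (ii) in any envy-free allocation the spread agent can hold at most two items, since any three consecutive items contain some concentrated agent's $(1/2+\epsilon)$-item and would create envy; hence her utility, and the egalitarian welfare, is at most $1/n$. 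You also need to exhibit an envy-free contiguous allocation explicitly (the paper gives items $2i-1,2i$ to agent $i$), since otherwise the price of envy-freeness is undefined for the instance and the example proves nothing.
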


\begin{proof}
\emph{Upper bound}: Consider an arbitrary instance. If the maximum egalitarian welfare of a contiguous allocation of the instance is at least $\frac{1}{2}$, then the allocation is also envy-free, and the price of envy-freeness is 1. Else, the maximum egalitarian welfare of a contiguous allocation is less than $\frac{1}{2}$. On the other hand, in any envy-free allocation, each agent has a utility of at least $\frac{1}{n}$, and therefore the egalitarian welfare is at least $\frac{1}{n}$.

\emph{Lower bound}: Let $m=2n$, $0<\epsilon<\frac{1}{2n}$, and assume that the utilities are as follows:

\begin{itemize}
\item For $i=1,\dots,n-1$: $u_i(2i)=\frac{1}{2}+\epsilon$, $u_i(2n-2i-1)=\frac{1}{2}-\epsilon$, and $u_i(j)=0$ otherwise.
\item $u_n(j)=\frac{1}{2n}$ for all $j$.
\end{itemize}

Consider the contiguous allocation that assigns item $2i$ to agent $i$ for $i=1,\dots,\lfloor\frac{n-1}{2}\rfloor$, item $2n-2i-1$ to agent $i$ for $i=\lfloor\frac{n-1}{2}\rfloor+1,\dots,n-1$, and items $n,n+1,\dots,2n$ to agent $n$. The egalitarian welfare of this allocation is $\frac{1}{2}-\epsilon$.

On the other hand, there exists an envy-free allocation, namely the allocation that assigns items $2i-1$ and $2i$ to agent $i$ for all $i\in N$. Consider any envy-free allocation. If agent $n$ gets at least three items in this allocation, her bundle will include an item for which some other agent has utility $\frac{1}{2}+\epsilon$, and the allocation cannot be envy-free. Hence agent $n$ has utility at most $\frac{1}{n}$, and the egalitarian welfare is also at most $\frac{1}{n}$. Taking $\epsilon\rightarrow 0$, we have our claim. 
\end{proof}

Note that even though envy-free allocations are always proportional, the utilitarian price of envy-freeness is lower than the utilitarian price of proportionality. This is not a contradiction since we only consider instances for which a contiguous allocation satisfying the fairness notion in question exists when computing the price of fairness. Indeed, in the instance used to show the lower bound in Theorem~\ref{thm:utilproportional}, no envy-free allocation exists.

\section{Conclusion and Future Work}

In this paper, we study the problem of fairly allocating indivisible items on a line in such a way that each agent receives a contiguous block of items. This can be used to model a variety of practical situations, including allocating offices to research groups, retail units to retailers, and time slots for using a conference center to conference organizers. We show that we can find contiguous allocations that satisfy approximate versions of classical fairness notions. Notably, these approximation guarantees do not degrade as the number of agents or items grows. We also quantify the loss of efficiency that occurs when we impose fairness constraints on contiguous allocations.

We conclude the paper by presenting some directions for future work.

\begin{itemize}
\item For envy-freeness with an arbitrary number of agents, can we close the approximation factor gap between $u_{\text{max}}$ and $2u_{\text{max}}$? Can we obtain similar guarantees if we also require Pareto optimality?
\item Can we show the asymptotic existence or non-existence of contiguous allocations satisfying proportionality or envy-freeness if we assume that the utilities are drawn from certain distributions? This has been shown for non-contiguous allocations \cite{DickersonGoKa14,ManurangsiSu17,Suksompong16}.
\item Does there exist an efficient algorithm that computes an approximate equitable allocation with a nontrivial welfare guarantee?
\item How do the prices of fairness change if we define them with respect to approximate fair allocations (which always exist) instead of non-approximate fair allocations (which do not always exist)?
\end{itemize}

\bibliographystyle{abbrv}
\bibliography{main}

\end{document}